\let\emptyset\varnothing
\let\epsilon\varepsilon
\let\phi\varphi
\def\N{\mathbb N}
\def\E{{\operatorname{\bf{E}}}}
\newtheorem{theorem}{Theorem}
\newtheorem{proposition}{Proposition}
\newtheorem{definition}{Definition}
\def\st{\operatorname{St}}
\def\enc{\operatorname{StEnc}}
\def\dec{\operatorname{StDec}}
\journal{Information \& Computation}
\begin{document}

\begin{frontmatter}
 \author[B]{Boris Ryabko}
\author[D]{ Daniil Ryabko}
\address[B]{
Siberian State University of Telecommunications and Information Sciences and \\ Institute of 
Computational Technology  of Siberian Branch of Russian Academy of Science,\\ Kirov str.86 Novosibirsk,  630102, Russia. email: boris@ryabko.net}
\address[D]{INRIA Lille, 40, avenue Halley
Parc Scientifique de la Haute Borne
59650 Villeneuve d'Ascq, France. email: daniil@ryabko.net}
\title {
Constructing Perfect Steganographic Systems  
}

\begin{abstract}
We propose steganographic systems for the case when covertexts
(containers) are generated by
a finite-memory
source  with possibly unknown statistics. The probability
distributions of covertexts with and without hidden information
are the same; this means that the proposed stegosystems are
perfectly secure, i.e.\ an observer cannot determine whether hidden
information is being transmitted. 
The speed of transmission of hidden information can be made arbitrary close to the theoretical limit~--- the Shannon entropy of
the source of covertexts. 
An interesting feature of the suggested  stegosystems is that they do not require any (secret or public) key.

At the same time, we outline some principled computational limitations on steganography. We show that
there are such sources of covertexts, that any stegosystem that has linear (in the length of the covertext) speed 
of transmission of hidden text must have an exponential Kolmogorov complexity. This shows, in particular, that some assumptions on the 
sources of covertext are necessary.
\end{abstract}

\begin{keyword}
  Steganography \sep Kolmogorov complexity \sep
Information Theory \sep Shannon entropy. 
\end{keyword}

\end{frontmatter}

\section{Introduction}
In this work we take an information-theoretic approach to steganography, and construct
perfectly secure steganographic systems for the case of finite-memory sources of covertext.
We also show that some (probabilistic) assumptions on the sources of covertexts are necessary, by demonstrating 
some principled computational limitations on steganography that arise in the absence of such assumptions.

Perhaps the first  information--theoretic approach  to steganography was proposed by Cachin \cite{Cachin:04},
who modeled  the sequence of covertext  by  a memoryless  distribution.
Besides laying out basic definitions of steganographic protocols and their security,
Cachin has constructed a steganographic protocol,  which,
relying on the fact that the probability distribution of covertexts is known, assures that  the distributions of covertexts
with and without hidden
information are statistically close (but, in general, are not equal). For the case of an unknown distribution,
a universal (distribution--free) steganographic system was proposed, in which this property holds only asymptotically
with the size of the hidden message going to infinity.
Distribution-free stegosystems  are of particular practical importance, since in reality
covertexts can be a sequence of   graphical images, instant or email messages,  that is, sources for which
the distribution is not only unknown but perhaps cannot be reasonably approximated.
Cachin has also defined perfectly secure steganographic systems as those for which the probability distribution
of covertexts with and without hidden information are the same.

In \cite{Ryabko:09steg} a perfectly secure universal (that is, distribution-free) steganographic system was proposed, for the case 
of i.i.d. sources of covertexts. 
Here we generalize this construction,   obtaining a  perfectly secure universal  steganographic system for 
a much larger class of sources covertext: that of all $k$-order Markov sources. The only probabilistic characteristic of 
the source that has to be known is the bound $k$ 
on the memory. 

For any stegosystem the next property after its security that is of interest is its capacity.
The capacity of a stegosystem can be defined as the number of hidden bits transmitted per letter of covertext.
We show that our  stegosystem  has the maximal possible capacity:
the number of hidden bits per covertext approaches (with the length of the block growing)
the Shannon entropy of the source of covertexts.
%

Another important feature of our stegosystems is that they do not require a secret key. Thus,
the constructions presented demonstrate that in order to achieve perfect steganographic
security no secret has to be shared between the communicating parties. Clearly, in this case Eve (the observer)
 can retrieve the secret message being transmitted; however, she will not
be able to say {\em whether it is a secret message or a random noise}. This property of
our stegosystems (as indeed their secrecy) relies  on the fact that the secret message
transmitted is indistinguishable from a Bernoulli (i.i.d.) sequence of equiprobable bits (random noise).
This is  a  standard assumption that can be easily fulfilled if Alice uses the Vernam cipher
(a one-time pad) to encode the secret before transmitting. For this, obviously, a cryptographic key
is required. In other words, a secret key can be used to obtain {\em cryptographic security},
but it is not required to obtain {\em steganographic security}, as long as the hidden information
is already indistinguishable from random noise. This also means that the proposed stegosystems
can be directly applied for covert public-key cryptographic communication.

 The main idea behind the stegosystems we propose is the
following. Suppose that for a  covertext $x$ generated by
a source, we can find a set $S$ of covertexts such that each covertext in $S$   has the same
probability of being generated as $x$. Moreover, assume
that each element of $S$   defines $S$  uniquely. Then,
instead of transmitting the covertext $x$ that was actually generated, we can
transmit the covertext in the set $S$ whose  number in $S$ (assuming some pre-defined natural ordering) corresponds
to the secret text we want to pass. This does not change the
probabilistic characteristics of the source, provided the secret 
text consists of i.i.d.\ equiprobable bits. Therefore, an observer
cannot tell whether secret information is being passed. Consider a
simple example. Suppose that Alice wants to pass a single bit, and
assume that the source of covertexts is i.i.d., but its
distribution is  unknown. Alice reads two symbols from the source,
say $ab$. She knows that (since the source is i.i.d.) the
probability of $ba$ is the same. So if Alice's secret bit  to
pass is 0 she transmits $ab$ and if she needs to pass 1 then she
transmits $ba$. However, if the source has generated $aa$ then
Alice cannot pass her secret bit, but she has to transmit $aa$
anyway, to preserve the probabilistic characteristics of the
source. (This example is considered in more details in
Section~\ref{sec:main}.) The same idea was used by  von~Neumann
\cite{Neumann:51} in his method of extracting random equiprobable bits
from a source of i.i.d.\ (but not necessarily equiprobable)
symbols. 
There are two disadvantages of the outlined stegosystem: first, the rate of transmission
of secret text is not optimal, and second, it applies only to i.i.d.\ covertexts.
The following generalization surmounts both obstacles.
First, observe that for a sequence of symbols of length $n$ output
by an i.i.d.\ source (with unknown characteristics), all 
permutations of this sequence have the same probability. To pass secret
information, Alice transmits the permuted sequence whose index number
(in the set of all permutations) encodes her message. A
stegosystem based on this principle achieves (asymptotically with
the block length $n$ growing) maximal possible rate of
transmission of hidden text: the Shannon entropy of the source of
covertexts. Moreover, this idea  works far 
 beyond i.i.d.\ sources of covertexts, by passing from all permutations of a string of covertexts, 
to the set of strings that have the same frequency of occurrence of all tuples of a given length. This way we can  construct
 a stegosystem for $k$-order Markovian sources (for any
given $k$).

Thus, we will show that there is a wide class of sources of covertexts, for which  simple, 
perfectly secure steganographic systems exist.
Naturally, one is interested in the question of
whether such stegosystems exist for any possible (stationary)
source of covertext. This problem is of interest since sources of
covertexts that are of particular practical importance, such as
texts in natural languages or photographs, do not seem to be
well-described by any known simple model (in particular, the finite-memory assumption is often violated). Here we answer this
question in the negative. More precisely, we demonstrate that there
exist such  sets of distributions on  covertexts of  length $n$, for which
 simple stegosystems whose speed of transmission of hidden text  $\Omega(n)$ do not exist. Here  simplicity
is measured by Kolmogorov complexity  of the system, and  stegosystem is considered ``simple''
when its complexity is $\exp( o(n))$, when $n$ goes to infinity.
Kolmogorov complexity  is an intuitive notion which  often helps to establish first
results that help to understand the principled limitations a certain problem
or model imposes; it has been used as such in many works, see, for example, \cite{Vitanyi:08, Uspenskii:90, Vitanyi:00, Vyugin:02}.

This result can be interpreted as that there are such complicated sources
of data that one cannot conceivably put significantly more information into a source
without changing its characteristics, even though the entropy of the source is very high.
This seems to reflect what is well-known in practice; to take one example, it is apparently very hard to put any hidden
message into a given text in a natural language, without making the text ``unnatural.'' Of course, rather
than trying to change a given text,  the communicating
parties can easily agree in advance on two  texts each of which codes one secret bit, so that when the need
for communication arises, Alice can transmit one of the texts, thereby passing one bit. However,
in order to communicate more than one bit, to use the same method they would have to have
a database of covertexts that is exponentially large with respect to the message to pass. Moreover,
even this stegosystem would not be perfectly secure, since the source of covertexts with hidden
information is concentrated on a small subset of all the possible covertexts of given length.
If the stegosystem is used once, then perhaps no reliable detection of the hidden message is possible.
If it is to be used on multiple occasions, that is, if we wish to construct a general purpose stegosystem
for transmitting, say, $\delta n$ bits with an $n$-bit message (for some fixed $\delta>0$), we will need
to construct a database of effectively all possible covertexts. At least, this is the case for some sources
of covertexts, as we demonstrate here, and it seems likely that it is the case for such 
sources as texts in natural languages or even photographic images. Thus, our negative result may be helpful
in clarifying the nature of the difficulties that arise in construction of real steganographic
systems which use human-generated sources of covertexts.

{\bf Contents.} The rest of the paper is organized as follows.
In the next section we present the basic definitions.
In Section~\ref{sec:main} we present a (perfectly secure) stegosystem for finite-memory source of covertexts,
which has the mentioned asymptotic
properties of the rates of hidden text transmission; this stegosystem is a generalization  of
the stegosystem for i.i.d.\ sources of covertexts described in  \cite{Ryabko:07stego, Ryabko:09steg}. 
In Section~\ref{sec:alg} we
briefly describe how this stegosystem can be algorithmically realized in practice.
In Section~\ref{sec:skc} we present a result of the opposite kind: there are  sources of covertexts that
are so complex that any stegosystem that has a linear speed of transmission, must have an exponential Kolmogorov complexity.

\section{Notation and definitions}\label{sec:not}
We use the following model for steganography, mainly following \cite{Cachin:04}.
It is assumed that Alice  has an access to an oracle which generates covertexts
according to some fixed but unknown {\em distribution of covertexts $\mu$}.
 Covertexts belong
to some (possibly infinite) alphabet $A$. Alice wants to use this source
for transmitting hidden messages.
It is assumed that Alice does not know the distribution of covertexts generated by the oracle, but
  this distribution is either memoryless or has a finite memory (or order); moreover, a bound on the memory
of the source of covertexts is known to all the parties (and is used in the stegosystems as a parameter).  

A {\em hidden message} is a sequence of letters from $B=\{0,1\}$
generated independently with equal probabilities of $0$ and $1$. We denote the {\em source of hidden messages} by $\omega$.
This is a commonly used  model for the source of secret messages, since it
is assumed that secret messages are encrypted by Alice using a key shared only with Bob.
If Alice  uses the
Vernam cipher  (a one-time pad) then the encrypted messages are indeed generated according
to the Bernoulli $1/2$ distribution, whereas if Alice uses modern block or stream ciphers
the encrypted sequence ``looks like'' a sequence of random Bernoulli $1/2$
trials. (Here
 ``looks like'' means indistinguishable in polynomial time,
or that the likeness is confirmed experimentally by
 statistical data, see, e.g. \cite{Menzes:96, BRyabko:05}.)
The third party, Eve, is a passive adversary: Eve  is reading all messages passed from Alice to Bob and is trying
to determine whether secret messages are being passed in the covertexts or not.
Clearly,  if covertexts with and without hidden information have the same
probability distribution ($\mu$) then  it is impossible to distinguish them.
Finite groups of (covertext, hidden, secret) letters are sometimes called (covertext, hidden, secret) words.
Elements  of $A$ ($B$) are usually denoted by $x$ ($y$).

The steganographic protocol can be summarized in the following definition.
\begin{definition}[steganographic protocol]
 Alice draws a {\bf sequence of covertexts} $x^*=x_1,x_2,\dots$ generated by a {\bf source of covertexts} $\mu$, where $x_i$, $i\in\N$ belong to some
(finite or infinite) alphabet $A$.

Alice has a sequence $y^*=y_1,y_2,\dots$ of {\bf secret text} generated  by a source $\omega$ of i.i.d.\ 
equiprobable bits $y_i$:  $\omega (y_i=0)=\omega (y_i=1)=1/2$, independently for all $i\in\N$.
The sources $\mu$ and $\omega$  are assumed independent.

A {\bf stegosystem} $St$ is a pair of functions: an encoder and a decoder. The encoder $StEnc$
 is a  function from $A^n\times \{0,1\}^*$ (a block of covertexts and a secret sequence) to $A^n$,
where $n\in\N$ is a parameter (the block length), whose value is known to all parties (including Eve).
The decoder $StDec$ is a function from $A^n$ to $\{0,1\}^*$. Moreover, $StDec(StEnc(x,y))=y$ for all $(x,y)\in A^n\times \{0,1\}^*$ for
which $StEnc(x,y)$ is defined (that is, {\bf decoding is performed without errors}).
It is assumed that $St^n(x,t)$ can be undefined for some values of $(x,t)\in A^n\times \{0,1\}^*$, the interpretation 
being that Alice can chose how many secret bits she can transmit based on the covertext $x$ and the secret text $y^*$, and that she always has more secret 
bits than she can transmit.

From $x^*$ and $y^*$ 
Alice, using a  stegosystem
$St$ obtains a {\bf steganographic sequence} $X=X_1,X_2,\dots$
that is transmitted over a public channel to Bob.
 Bob (and any possible observer Eve) receives $X$ and obtains, using the decoder $StDec(X)$, the  resulting sequence $y^*$.

The {\bf speed of transmission of secret text} $L_n$ is defined as the expected (with respect to the sources of covertexts $x^*$ and secret bits $y_1,y_2,\dots$)
average (per letter of  covertext)
length of the secret message that is transmitted
\begin{equation}\label{eq:ln}
 L_n(St):= {1\over n}\E_{\mu\times\omega} \max \{k\in\{0\}\cup\N: StEnc(x_1\dots x_n,y_1..y_k)\text{ is defined}\} 
\end{equation}

\end{definition}

For the convenience of notation, the definition is presented in terms of an infinite sequence of secret text. 
It means that a stegosystem can use as many or as few bits of the hidden text for transmission in a given block as is needed.
In practice, of course, Alice has only a finite sequence to pass, which may result in that she will run out of secret bits
when transmitting the last block of covertexts. In this case we assume that the end of each message can always be determined
(e.g. there is always an encrypted ``end of message'' sign in the end), so that Alice can fill up the remainder with random noise.
The sequence of covertexts obtained from the source is routinely broken into blocks of size $n$, when $n$ is the parameter of the
stegosystem. For comparison, in the simple stegosystem presented in the beginning of this section we had $n=2$.

Observe that we require by definition of a  steganographic system that the decoding is always correct. Moreover,
we do not consider noisy channels or active adversaries, so that Bob always receives  what Alice has transmitted.

Note also that there is no secret key in the protocol. A secret key may or may not be used before entering into the steganographic
communication in order to obtain the hidden sequence $x^*$; however, this is out of scope of the protocol.

\begin{definition}[perfect security]
 A steganographic system is called (perfectly) secure if the sequence of covertexts $x^*$ and the steganographic sequence $X$ have
the same distribution: $Pr(x_1,\dots,x_n \in C)=Pr(X_1,\dots,X_n
\in C)$ for any
(measurable)
$C\subset A^n$
  and any $n\in\N$, where
the probability is taken with respect to all distributions involved: $\mu$ and $\omega$.
\end{definition}

\section{A universal  stegosystem for $k$-order Markov sources}\label{sec:main}
Before presenting the stegosystem for $k$-order Markov sources, 
we give an  example of a very simple  stegosystem for i.i.d.\ sources. This stegosystem   demonstrates
in a most concise way the main ideas used then  in the general construction.

 Consider a situation in which  the source of covertexts $\mu$ generates i.i.d.
symbols from the alphabet $A=\{a_1,a_2,a_3\}$.
Let, for example, 
\begin{equation}\label{eq:ex}
y^*=01100\dots, \ \ x^*=a_1a_1\ a_2a_3\ a_3a_3\ a_1a_3\ a_2a_2\ a_2a_1\ a_2a_1\ a_3a_2\dots,
\end{equation}
where $y^*$ was generated  by $\omega$ and $x^*$ is a sequence of covertexts generated by $\mu$.
(Spaces between pairs of letters are introduced to facilitate the reading.)

 The symbols of $x^*$ are grouped into pairs (thus, the block length $n$ equals 2 in this example), which are processed sequentially as follows. 
If the current pair is $a_i a_i$, where $i\in\{1,2,3\}$, then this pair is transmitted unchanged to Bob, 
and no secret information is transmitted with  it. If the current pair is $a_i a_j$ with $i\ne j$
then Alice transmits this pair ordered lexicographically (that is, ordered with respect to the ordering $a_1<a_2<a_3$)
if the secret bit to transmit is 0, and she transmits this pair ordered reverse-lexicographically if the secret bit 
is~1. In other words, in the case $i\ne j$ Alice transmits a pair of symbols selected as follows:
$$
\begin{array}{c|cc}
 \ & y=0 & y=1 \\\hline
 i<j & a_ia_j & a_ja_i   \\
 i>j & a_ja_i &  a_ia_j \\
\end{array}
$$
In our example, the sequence~(\ref{eq:ex}) is transmitted as 
$$
X=a_1a_1\ a_2a_3\ a_3a_3\ a_3a_1\ a_2a_2\ a_2a_1\ a_1a_2\ a_2a_3\dots
$$
Decoding is obvious: Bob groups the symbols of $X$ into pairs, ignores all occurrences of $a_ia_i$, 
and changes $a_ia_j$ to $0$ if $i<j$ and to 1 otherwise.
\begin{proposition}\label{pr}
Suppose that a source $\mu$ generates i.i.d. random variables
taking values in $A=\{a_1,a_2,a_3\}$ and let this source be used for encoding
secret messages consisting of a sequence of i.i.d. equiprobable binary
symbols using the method described above.
Then the sequence of symbols output by the stegosystem
obeys the same distribution $\mu$ as the input sequence.
\end{proposition}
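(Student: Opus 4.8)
The plan is to exploit the single structural fact that under an i.i.d.\ source the two ``mixed'' pairs $ab$ and $ba$ are equiprobable, and then to push the equiprobable, independent secret bits through the encoding so that each mixed output pair is again $ab$ or $ba$ with probability $1/2$. Write $p=\mu(a)$ and $q=\mu(b)=1-p$. First I would reduce the statement about the whole sequence to a statement about pairs. Group both the input $x^*$ and the output $X$ into consecutive pairs $Z_1,Z_2,\dots$ and $X^{(1)},X^{(2)},\dots$, each taking values in $\{aa,ab,ba,bb\}$. Since $\mu$ is i.i.d., the input pairs $Z_j$ are i.i.d.\ with marginal $\nu:=\mu\otimes\mu$, i.e.
$$\nu(aa)=p^2,\qquad \nu(bb)=q^2,\qquad \nu(ab)=\nu(ba)=pq.$$
Because $\nu$ is itself a product measure, it suffices to prove that the output pairs $X^{(j)}$ are again i.i.d.\ with this same marginal $\nu$: independence across pairs together with the product form of $\nu$ within each pair yields that all letters of $X$ are i.i.d.\ $\sim\mu$, which is exactly the claim $X\sim\mu$.

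The central step is the bookkeeping of secret bits, and this is where I expect the only real difficulty to lie. The encoder leaves every $u$-pair ($aa$ or $bb$) untouched and overwrites the mixed pairs, in order, with $ab$ or $ba$ according to the next unused secret bit. The index of the secret bit consumed at a given position is therefore itself random, so a careless argument could accidentally correlate the output pairs. The clean way around this is to condition on the entire input sequence $Z$: the set $V=\{j:Z_j\in\{ab,ba\}\}$ of mixed positions then becomes deterministic, and the secret-bit indices assigned to the positions of $V$ are distinct (the encoder advances by one each time). Since $\omega$ produces i.i.d.\ equiprobable bits independent of $\mu$, the bits read at these distinct indices form a family of i.i.d.\ fair coins that is independent of $Z$. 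Hence, conditionally on $Z$, each $u$-position is fixed while each mixed position independently becomes $ab$ or $ba$ with probability $1/2$.

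Finally I would take the expectation over $Z$ and verify the marginal and the independence simultaneously. For arbitrary target pairs $c_1,\dots,c_m$, the conditional probability given $Z$ factorizes over positions, and averaging over the independent $Z_j$ produces a genuine product $\prod_{j\le m}\nu(c_j)$: a $u$-target at position $j$ contributes $\nu(aa)=p^2$ or $\nu(bb)=q^2$, while a mixed target contributes
$$Pr(Z_j\in\{ab,ba\})\cdot\tfrac12 = 2pq\cdot\tfrac12 = pq = \nu(ab)=\nu(ba).$$
This shows that the $X^{(j)}$ are i.i.d.\ $\sim\nu=\mu\otimes\mu$, and by the reduction of the first paragraph the full output sequence $X$ obeys the distribution $\mu$, as required. (The stegosystem $St^0_2$ uses no auxiliary randomness, so the source $\Delta$ plays no role here.)
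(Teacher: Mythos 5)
Your proof is correct. Note that the paper itself offers no proof of this proposition --- it is explicitly omitted (``The proof of this statement is simple, and we omit it''), with the reader deferred to the stronger Theorem~\ref{th:main} of Section~\ref{sec:main}. So the natural comparison is with the proof of that theorem, and there your argument is actually more careful on one point. The paper's proof of Theorem~\ref{th:main} conditions on a single covertext block $u$ and computes the output distribution of that block by summing over the randomization variable $\Delta$; it treats only ``the first block to be transmitted'' and leaves implicit why the construction does not introduce correlations across blocks. Your proof confronts exactly this issue: since the index of the secret bit consumed at a given pair position is itself random (it depends on how many mixed pairs preceded it), a naive block-by-block computation could hide cross-block dependence. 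Conditioning on the entire input sequence $Z$ makes the set of mixed positions and the bit-index assignment deterministic, the bits read at those distinct indices remain i.i.d.\ fair coins independent of $Z$ (by independence of $\omega$ and $\mu$), and the conditional law then factorizes over positions; averaging over the independent $Z_j$ recovers the product law $\nu=\mu\otimes\mu$ on pairs and hence i.i.d.\ $\mu$ on letters. This is a complete and rigorous argument, and the reduction from letters to pairs plus the conditioning device would transfer verbatim to the general stegosystem $St^0_n(A)$ (with the extra step of handling $\Delta$ when $|S_u|$ is not a power of~$2$, which for $St^0_2$ never arises, as you correctly observe).
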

The proof is easy to derive; it  is given in \cite{Ryabko:09steg}.  It is also easy to see that the same method
can be used when the alphabet $A$ is any partially ordered set. 
The ordering can also be arbitrary, and can be known to the observer Eve. For example, 
in the case when $A$ is the set of all digital images, one can assume length-lexicographical ordering on $A$. 

\subsection{The general construction}\label{sec:gen}
Next we describe  the general construction of a universal stegosystem which has
the desired asymptotic properties for finite-memory  sources of covertext.
The main idea is as follows. First, the given sequence of covertexts is divided into blocks, say, of length $n>2k$, where $k$ is an upper bound 
 of the memory of
the source $\mu$ of covertexts.
For each block $x=(x_1,\dots,x_n)$, Alice finds all sequences of covertexts of lengths $n$ that have the same probability
as $x$ and also have the same $k$ leading and $k$ trailing symbols (the latter has to be done so that the
probability of the sequence of  blocks as a whole is intact). Then Alice enumerates all these sequences, and transmits the one
whose number codes her hidden text. To find the sequences that have the same probability as the given one,
this probability itself does not have to be known. Indeed,  words  that
have the same number of occurrence of all subwords of length $k+1$ have the same probability, for any
 $k$-order Markov distribution.

We now proceed with a more formal exposition.
\begin{definition}
A  source (of covertexts) $\mu$ is called (stationary) $k$-order Markov, if 
\begin{multline*}
\mu(x_{n+1}=a|x_n=a_n,x_{n-1}=a_{n-1},\dots,x_1=a_1)\\ = \mu(x_{k+1}=a|x_k=a_n,x_{k-1}=a_{n-1},\dots,x_1=a_{n-k+1})
\end{multline*}
for all $n\in\N$ and all $a,a_1,a_2\dots,a_n\in A$.
\end{definition}

As before, Alice needs to  transmit a sequence $y^*=y_1y_2\dots$ of secret binary messages drawn
by an i.i.d.\ source $\omega$ with equal probabilities of $0$ and $1$, while
 a sequence of covertexts $x^*=x_1x_2\dots$ drawn by an (unknown) source $\mu$ from an alphabet $A$ is available. 
It is known that $\mu$ has memory not greater than $k$, where $k>0$ is given.
First we break the sequence $x^*$ into blocks of $n$ symbols each, where  $n>1$ is a parameter.
Each block will be used to transmit several symbols from  $y^*$  (recall that in the
simple stegosystem given in the beginning of this section we had $n=2$ and each block was used to transmit 1 or 0 symbols).
In this general case the following technical problem arises:
the lengths of the blocks of symbols from $x^*$ and from $y^*$ have to be aligned.
 The problem is that the probabilities of blocks from $y^*$ are divisible by powers of $2$, which is not necessarily the
case with blocks from $x^*$.


Let $u$ denote the first $n$ symbols of $x^*$: $u=x_1\dots x_n$
(the first block), and 
let $\nu_u(a_1\dots a_{k+1})$ be the number of occurrences of the subword $a_1\dots a_{k+1}$ in $u$. Define the set $S_u$ as the set 
of all words of length $n$ in which the frequency of each subword of length ${k+1}$ is the same as in $u$, and  whose first
and last ${k}$ symbols are the same as in $u$:
\begin{multline}
S_u=\Big\{v\in A^n:\\ \forall s\in A^{k+1}\ \nu_v(s)=\nu_u(s);\ \forall t\in \{1,\dots,k,n-k+1,\dots,n\}\ v_t=u_t \Big\}.
\end{multline}
Elements of such  sets, without the restriction on the first and last symbols, are known as strings of the same type, see \cite{Csiszar:98}.


Observe   that, if $\mu$ has memory not  greater than $k$, then  $\mu$-probabilities of all members of $S_u$ are
equal. Let there be given some ordering on the set $S_u$ (for
example, lexicographical) which is known to all communicating parties, and let 
$$S_u=\{s_0,s_1,\dots,s_{|S_u|-1}\}$$
with respect to this  ordering.

Denote $m =\lfloor{\rm log}_2|S_u|\rfloor$, where $\lfloor y\rfloor$ stands for the largest integer not greater than $y$.
Consider the binary expansion of $|S_u|$:

\begin{equation}\label{eq:alph}
|S_u|=(\alpha_{m},\alpha_{m-1},\ldots,\alpha_{0}) ,
\end{equation}
   where $\alpha_{m}=1$, $\alpha_j\in\{0,1\}$ , $m > j \geq0$.
In other words,
$$
  |S_u| =   2^m + \alpha_{m-1} 2^{m-1} +  \alpha_{m-2} 2^{m-2}+ ... + \alpha_0.
$$

Denote $\delta(u)$ the index of the word $u$ in the set $S_u$ (with respect to the considered order)
and let  $(\lambda_{m},\lambda_{m-1},\ldots,\lambda_{0})$ be the binary expansion of 
$\delta(u).$  Let  $j(u)$ be the largest number satisfying  $\alpha_j \neq \lambda_j$. 
Alice, having found   $j(u)$, reads  $j(u)$ letters from the source of hidden text  $y^*$;
let $\tau$ be the number whose binary expansion is this sequence of letters.
Alice finds the word $v$ in $S_u$ whose index is 
$\sum_{ j(u) < s \leq m} \alpha_s 2^s + \tau$  and transmits $v$ to Bob (in other words, 
$v$ is the output of the encoder).

The decoding is as follows. Bob, having received $v$, defines $S_v$ (which equals $S_u$),
then finds (in the same way as for encoding) the number $j(v)$ (which is the same for $u$ and $v$: $j(u)=j(v)$)
and $\tau$, and then using $\tau$ he finds $j(v)$ encoded symbols. 

All the subsequent $n$-letter words are encoded and decoded analogously.
Denote $St^k_n(A)$ the described stegosystem.

The $k$-order (conditional) Shannon
entropy $h_m(\mu)$ of a source $\mu$  is  defined as follows:
\begin{equation} h_m(\mu) =-
\sum_{v \in A^m} \mu(v) \sum_{a \in A} \mu(a|\,v) \log
\mu(a|v).
\end{equation}
\begin{theorem}\label{th:mark}
Suppose that an unknown $k$-order Markov source $\mu$   generates  a sequence of covertext
taking values in some alphabet $A$, where $k\ge0$ is known. Let this source be used for encoding
secret messages  consisting of a sequence of i.i.d. equiprobable binary
symbols using the described method $St^k_n(A)$ with $n>1$.
Then
\begin{itemize}
\item[(i)]
the sequence of symbols output by the stegosystem
obeys the same distribution $\mu$ as the input sequence,
\item[(ii)] If the alphabet $A$ is finite
then the average number of hidden symbols per letter $L_n$ goes to the
$k$-order Shannon entropy $h_k(\mu)$ of the source $\mu$ as $n$ goes to infinity.
\end{itemize}
\end{theorem}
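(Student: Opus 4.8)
The plan is to derive both parts from the corresponding arguments for the i.i.d.\ stegosystem of Theorem~\ref{th:main}, the only genuinely new ingredient being the treatment of the dependence between consecutive blocks.

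For part (i), I would first record the factorization that makes the scheme work. Writing $c=c_1\dots c_k$ for the $k$ symbols immediately preceding a block and $v=v_1\dots v_n$ for the block itself, the $k$-Markov property gives
\[
\mu(v\mid c)=\prod_{i=1}^{k}\mu\!\left(v_i\mid c_i\dots c_k v_1\dots v_{i-1}\right)\cdot\prod_{i=k+1}^{n}\mu\!\left(v_i\mid v_{i-k}\dots v_{i-1}\right).
\]
I would then observe that this quantity is the same for every $v\in S_u$: the first $k$ factors depend only on $c$ and on $v_1\dots v_k$, which are frozen by the boundary condition defining $S_u$, while the product over $i\ge k+1$ equals $\prod_{s\in A^{k+1}}\mu(s_{k+1}\mid s_1\dots s_k)^{\nu_u(s)}$ and hence depends only on the $(k+1)$-subword counts, which are also fixed on $S_u$. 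Thus $\mu(\cdot\mid c)$ is constant on each set $S_u$.

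Next I would reuse the $\Delta$-computation from the proof of Theorem~\ref{th:main} verbatim to conclude that, given that the input block has type $S_u$, the encoder outputs each $s\in S_u$ with conditional probability $1/|S_u|$; that calculation never used independence and carries over unchanged. Combining the two facts, for any target block $s$ and any context $c$,
\[
P(\text{output}=s\mid c)=\sum_{u'\in S_s}\mu(u'\mid c)\,\frac{1}{|S_s|}=|S_s|\cdot\mu(s\mid c)\cdot\frac{1}{|S_s|}=\mu(s\mid c),
\]
using that all $u'\in S_s$ share the conditional probability $\mu(s\mid c)$. Finally, since the encoder preserves the last $k$ symbols of every block, the context handed to the next block is the same for the input and the output streams; an induction on the number of blocks then multiplies these matching conditional probabilities and yields $Pr(X_1\dots X_{Nn}\in C)=Pr(x_1\dots x_{Nn}\in C)$ for every $N$ and every $C$, which is perfect security.

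For part (ii), the per-block accounting of $\Delta$ is identical to Theorem~\ref{th:main}, so $L_n\ge\frac1n\big(\E[\log|S_u|]-3\big)$ and it suffices to show $\frac1n\log|S_u|\to h_k(\mu)$ almost surely. This is the method-of-types statement for $k$-Markov sources, the direct analogue of the fact $\frac1n\log|S_u|\to h(\mu)$ used in the i.i.d.\ corollary; it follows from the Shannon--McMillan--Breiman theorem for finite-alphabet $k$-Markov sources, which gives $-\frac1n\log\mu(u)\to h_k(\mu)$ a.s., together with the observation that $\mu$ is uniform on each type class (so that $\mu(u)=\mu(S_u)/|S_u|$, whence $\frac1n\log|S_u|=-\frac1n\log\mu(u)+\frac1n\log\mu(S_u)$) and the fact that the number of admissible $(k+1)$-subword count vectors with fixed boundary is polynomial in $n$ (so that $\frac1n\log\mu(S_u)\to0$). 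The main obstacle is the inter-block dependence in part (i): unlike the i.i.d.\ case one cannot argue block-by-block in isolation, and the whole point of freezing both the leading and the trailing $k$ symbols is precisely to make the conditional law of each block depend on the past only through a context that the encoder leaves untouched. Verifying that these two boundary constraints, together with the subword-count constraint, are exactly what is needed to keep every transition factor in the displayed product unchanged is the delicate step; once it is in place, the rest is a transcription of the i.i.d.\ proof.
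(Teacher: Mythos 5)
Your proposal is correct. For part (i) you follow essentially the same route as the paper: its proof is exactly the block-by-block argument that the elements of $S_u$ are (conditionally) equiprobable, that the encoder selects uniformly from $S_u$ (the $\Delta$-computation inherited from Theorem~\ref{th:main}), and that freezing the trailing $k$ symbols makes the context seen by the next block identical in the input and output streams; you simply spell out the chain-rule factorization and the conditional-independence bookkeeping that the paper compresses into a single displayed chain of equalities. The genuine divergence is in part (ii). The paper introduces the unconstrained class $S_u'$ (same $(k+1)$-subword counts, boundaries free), cites Csisz\'ar's method of types for $\log|S_u'|=n h_k(P_u)+o(n)$, asserts that $|S_u|$ is at most a constant factor smaller than $|S_u'|$, and finishes with the law of large numbers applied to the empirical entropies $h_k(P_u)$. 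You instead estimate $|S_u|$ directly: since $\mu$ is uniform on each class, $\frac1n\log|S_u|=-\frac1n\log\mu(u)+\frac1n\log\mu(S_u)$; the Shannon--McMillan--Breiman theorem handles the first term and the polynomial number of classes controls the second. This is more self-contained (no appeal to the type-counting theorem), and it quietly sidesteps the paper's unproved comparison of $S_u$ with $S_u'$ --- a comparison which, in the Eulerian-circuit case where the counts do not force the boundary words, is really polynomial rather than constant (still harmless after taking logarithms, but not obvious). The one step you gloss over is converting ``polynomially many classes'' into $\frac1n\log\mu(S_u)\to 0$ along the random sequence of blocks: this needs a union bound over the classes of small probability (plus Borel--Cantelli for almost-sure convergence, or merely convergence in probability combined with the uniform bound $\frac1n\log|S_u|\le\log|A|$, which is all that is required to pass to the expectation defining $L_n$). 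That argument is routine, so I would count this as a presentational rather than a mathematical gap; both you and the paper also share the implicit assumption (ergodicity of $\mu$) needed for the almost-sure convergence statements.
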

\begin{proof}
 To prove (i) observe that  if, as before, $x_1,x_2\dots$ denotes the sequence generated by the
 source of covertexts, and $X_1,X_2,\dots$
the transmitted sequence,  then by construction 
 we have $P(X_1,\dots,X_n)=P(x_1,\dots,x_n)$
where $n$ is the length of the block. For the second  block we have
\begin{multline*}
P(X_{n+1},\dots,X_{2n}|X_1,\dots,X_n) = P(X_{n+1},\dots,X_{2n}|X_{n-k+1},\dots,X_n)\\=P(X_{n+1},\dots,X_{2n}|x_{n-k+1},\dots,x_n)=
P(x_{n+1},\dots,x_{2n}|x_{n-k+1},\dots,x_n)
\\=
P(x_{n+1},\dots,x_{2n}|x_1,\dots,x_n),
\end{multline*}
where the first and the last equalities follow from the $k$-Markov property, the second is by construction (the last $k$ symbols of each block are kept intact),
and the third one  holds because the hidden texts are equiprobable, as are the elements of $S_u$.
The same holds for all the following blocks, thereby establishing the equality of distributions (i).

Let $S_u'$  be the set of all strings of length $n=|u|$ that have  the same $k$-type as $u$, that
is, the same frequencies of subwords of length $k$: $S_u'=\{v\in A^n: \forall s\in A^{k+1}\ \nu_v(s)=\nu_u(s)\}$.
In other words, $S_u'$ is the same as $S_u$ except  the $k$ first and last symbols are not fixed.
Using a result of the theory of types \cite{Csiszar:98}, for  any $u$ for the size of the  set  $S_u'$ we have   $\log |S_u'|= n h_k(P_u) + o(n)$, where $h_k(P_u)$ is
the $k$-th order entropy of the $k$-order Markov distribution $P_u$ defined by the empirical frequencies of the word $u$.
Since the set $S_u$ is not more than a constant times smaller than $S_u'$ we also have $\log |S_u|= n h_k(P_u) + o(n)$.
Moreover, the law of large numbers implies that $h_k (u)\rightarrow h_k(\mu)$ for $\mu$-almost every sequence $u$ as
its size $n$ goes to infinity. Therefore, 
\begin{equation}\label{eq:sp}
 \log |S_u|= n h_k(\mu) + o(n)\ \text{ with $\mu$-probability 1.}
\end{equation}

Furthermore, define 
$\phi := | S_u|/2^m $ and  let $L(S_u)$ be the average number of secret bits transmitted per 
word from  $S_u:$
$$L(S_u) = \frac{1}{ |S_u|} \sum_{i=0}^m \alpha_i i  2^i  .$$ 
We have
\begin{multline*} L(S_u) = \frac{1}{ |S_u|} \sum_{i=0}^m i\alpha_i 2^i   = \frac{1}{ |S_u|}\left( m  
 \sum_{i=0}^m \alpha_i   2^i - \sum_{i=0}^m \alpha_i 2^i (m-i) \right) \\ = 
 m - \left(2^m \sum_{k=0}^m k \alpha_{m-k}2^{-k}\right) > m - 2^{m+1} / |S_u| =  m- 2/\phi
\\ =
\log |S_u| - \log \phi -2/ \phi .
\end{multline*}
Computing the maximum, we find $\log \phi + 2/\phi \leq 2$ for $\phi \in  [1,2]$.
Thus,   $ L(S_u) > \log |S_u| - 2$.  From this and~(\ref{eq:sp}) we obtain the second statement of the  theorem.
\end{proof}

As it was mentioned in the Introduction, the main idea of the stegosystem $St$ is to construct, for each 
given block of covertexts, a set $S_u$ of equiprobable covertexts. The same idea was used in \cite{Ryabko:09steg}
to construct a stegosystem for i.i.d.\ sources $\mu$, with the main difference being in the definition of the sets $S_u$.
In that work we have also obtained non-asymptotic estimates on the speed of transmission of secret text. Such estimates
should be also possible  to obtain  for the case of $k$-order Markov sources, based on the results 
of the theory of types (e.g., \cite{Csiszar:98}), but, for the sake of simplicity, here we only consider asymptotic behaviour of the speed
of transmission.

\subsection{Complexity of encoding and decoding}\label{sec:alg}
Consider the resource complexity of the stegosystem
$St^0_n(A)$, that is, the general construction of the previous section, but for the case of i.i.d.\ sources of covertexts.
The only resource-demanding part of this stegosystem
is  finding the rank of a given block $u$ in the set $S_u$ of all
its permutations, and, vice versa, finding a block given its rank.
(It is clear that all other operations can be performed in linear time.)

Consider this computational problem in some detail. To store all
possible words from the set $S_u$ would require memory of order
 $|A'|^n n \log |A'|$
   bits,
   (where
   $A'\subset A$ is the set of all
 symbols that occur in $u$
  and $n=|u|$; without loss of generality in the sequel we assume $A=A'$),
 which is practically
unacceptable for large $n$. However, there are  algorithms for
solving this problem with polynomial resource complexity. The
first such algorithm,  that uses polynomial memory with the time
of calculation $c  n^2, c> 0,$ per letter,  was proposed in
\cite{Lynch:66} (see also \cite{Davisson:66, SB}). The time of calculation of the
fastest known algorithm is $O( \log^3 n),$ see \cite{BRyabko:98}.

Next we briefly present the ideas behind the algorithm
from~\cite{Lynch:66}. Assume the alphabet $A$ is binary. Let $S$ be the
set of  binary words of length $n$ with $w$ ones. The main
 observation is the following equality, which gives the lexicographical
index number of any  word $v= x_1 \ldots x_n$ $\in S:$
\begin{equation}\label{ld}
 rank(x_1 \ldots x_n) = \sum_{k=1}^{n} x_{k}
\left (\begin {array}{cc}
n-k \\
w -\sum_{i=1}^{k-1} x_{i}
\end{array} \right) ,
\end{equation} where $\left (\begin {array}{cc}
t \\
m
\end{array} \right) = t!/ (m! (t-m)!), $ $0! = 1$ and $\left (\begin {array}{cc}
t \\
m
\end{array} \right) = 0$ if $t<m.$
 The proof of this well-known equality can be found, for example,  in \cite{Krichevsky:93, BRyabko:98}.
As an example, for $n=4, w=2, v= 1010$ we have
$$
 rank(1010) =
\left (\begin {array}{cc}
3 \\
2
\end{array} \right) + \left (\begin {array}{cc}
1 \\
1
\end{array} \right) = 4.
$$

 The computation by (\ref{ld}) can be performed step-by-step based
 on the following obvious identities:
 $$ \left (\begin {array}{cc}
t  \\  p
\end{array}  \right)  =
\left (\begin {array}{cc} t-1 \\
p-1
\end{array}
\right)  \cdot \frac{t}{p}\enskip  ,\quad \left (
\begin  {array}{cc}  t\\
p
\end{array}
\right)=  \left  (\begin  {array}{cc}
t-1 \\
p
\end{array} \right) \cdot \frac{t}{t-p}. $$
A direct estimation of the number of multiplications and divisions
gives  polynomial time of calculations  by (\ref{ld}). The
method of finding a word $v$ based on its rank, as well as a 
generalization to non-binary alphabets, are based on the same
equality  (\ref{ld}); a detailed analysis can be found in
\cite{Krichevsky:93, BRyabko:98}.

For the  general case of $k$-order Markov sources of covertexts,  again,  the only resource-demanding  part
of the stegosystem $St_n^k(A)$ is the enumeration of all the sequences of the same type as a given one.
For the case $k=1$ and the binary alphabet, \cite{Cover:73} proposes an efficient algorithm for this problem.
For $k=1$ and arbitrary alphabet the recent work \cite{Zhou:2011} gives a method  (Lemma~13) of performing such an
enumeration in time $O(n\log^3\log\log n)$. 
(It is worth noting that the work \cite{Zhou:2011} is devoted to the problem of generating uniformly random bits from 
a Markovian source of data, generalizing the von Neumann scheme, which we have also used (Proposition~\ref{pr}) to 
construct a simple  example of a universal stegosystem. Thus the problem of steganography is closely related to the problem
of generating uniformly random bits from a non-uniform source of randomness.)
For the  case  $k>1$,  the problem of finding  polynomial-time algorithms, to the best of the authors' knowledge, 
 remains open.
 We conjecture that efficient algorithms for this case exist as well, based on the results
cited above, and on the consideration that often the $k$-order Markov case can be reduced to 1-order Markov 
by considering windows of size $k$ as states.


\section{Principled computational limitations on steganography}\label{sec:skc}
In this section we abandon all probabilistic assumptions on the source $\mu$ of covertexts, 
and do not consider asymptotic behaviour of stegosystems with respect to the sequence of covertext.
Therefore, it will be convenient to consider  distributions of covertexts $\mu$ as distributions 
on $A^n$, where $n$ is a parameter interpreted as the total number of covertexts output by the source $\mu$.
In other words, we have just one block of covertexts.  (Clearly, if we show that it is impossible (for some sources) to 
preserve the distribution of one block (the first one), then it is also impossible to preserve the distribution of several 
consecutive blocks, no matter what is the probabilistic dependence between them.)
 With this exception, the rest of the protocol is as defined 
in Definition~\ref{pr}.

We next briefly introduce the notion of {\em Kolmogorov complexity}. A
formal definition can be found, for example, in \cite{Vitanyi:08}.
Informally, Kolmogorov complexity of a word $s$ is the length of
the shortest program that outputs $s$. That is, for some universal
Turing machine $U$, we can define the Kolmogorov complexity
$K_U(s)$ of a binary word $s$ as the length of the shortest
program for $U$ that outputs $s$. There are such machines $U$ that
$K_U(s)\le K_{U'}(s)+const$ for every $s$ and every other
universal Turing machine $U'$ (the constant may depend on $U$ and $U'$ but not on $s$). Fix any such $U$ and 
define Kolmogorov complexity $K(s)$  as $K_U(s)$. 
(So, we can say that  Kolmogorov
complexity is defined up to an additive constant.) 
Complexity of
other objects, such as sets of words or programs, can be defined
via simple encodings into words.
  We will use
some simple properties of $K$, such as $K(s)\le
|s|+c$ for any word $s$, whose proofs can be found in e.g.
\cite{Vitanyi:08}. Here it is worth noting that $K(s)$ does
not take into account time or memory  it takes to compute $s$.

\begin{theorem}\label{th:main}
For every $\delta>0$ there is a family indexed by $n\in\N$ of
distributions $P_n$ on $A^n$ with $h(P_n)\ge n-1$, such that every
stegosystem  $\st_n$ whose Kolmogorov complexity  satisfies $\log
K(St_n)= o(n)$ and whose speed of transmission of hidden text
$L_n(\st_n)$ is not less than $\delta$, is not perfectly secure
from some $n$ on.
\end{theorem}


\begin{proof} The informal outline of the proof is as follows. We will construct a sequence of sets
  $X_n$ of words of length $n$ whose  Kolmogorov complexity is
the highest possible, namely $2^{\Omega(n)}$. For each $n\in\N$,
the distribution $P_n$ is uniform on $X_n$. We will then show
that, in order to have the speed of transmission $\delta>0$ a
perfectly secure stegosystem must be able to generate a large
portion of the set $X_n$, for each $n$. This will imply that the
complexity of such a stegosystem has to  be $2^{\Omega(n)}$. The
latter implication will be shown to follow from the fact that, in
order to transmit some information, a stegosystem must replace the
input with some output that could have been generated by the
source; this, for perfectly secure stegosystems, amounts to
knowing at least a large portion of~$X_n$.

Next we present a more formal proof.
Fix $n\in\N$ and let  $X\subset A^{n}$ be any set such that
$|X|=2^{n-1}$ and
\begin{equation}\label{kx}
 K(X)= 2^{n} (1 +o(1)).
\end{equation}
 The existence of such a set can be shown
by a direct calculation of the number of all subsets with
$2^{n-1}$ elements; the maximal complexity is equal (up to a
constant) to the $\log$ of this value.

The distribution $P_n$ is uniform on $X_n$.
Assume that there is a perfectly secure stegosystem $St_n$ for the
family $P_n$, $n\in\N$, and let the speed of transmission of
hidden text be not less than $\delta$. Define the set $Z$ as the
set of those words which are used as codewords $Z:=\{x\in A^n:
\dec(x)\ne\Lambda\}$. Since the expected speed of transmission of
hidden text is lower bounded by $\delta$, we must have $|Z|\ge
\delta 2^{n-1}$ (indeed, since every word codes at most $n-1$
bits, the expected speed of transmission must satisfy
$(n-1)\frac{|Z|}{2^{n-1}}\ge\delta n$). Since $\st$ is perfectly
secure we must have  $Z\subset X$.
Furthermore, define $Z_0$ as the set of words that code those
secret messages that start with 0, and $Z_1$ those that start
with~1:
\begin{equation}\label{z}
 Z_i:=\{x\in A^n: \dec(x)=iu, u\in \{0,1\}^* \}, i\in\{0,1\}.
\end{equation}
 Since $Z=Z_1\cup Z_0$ we must have $|Z_i|\ge |Z|/2\ge {\delta\over2}2^{n-1}$ for some
$i\in\{0,1\}$. Let this $i$ be~1.

 Thus, we have 
$|X\backslash Z_1|\le 2^{n-1}(1-{\delta\over2})$.
 Let us lower-bound the complexity
$K(Z_1|X\backslash Z_1)$ of the set $Z_1$ given $X\backslash Z_1$. Given
the description of $X\backslash Z_1$ and the description of $Z_1$
relative to $X\backslash Z_1$, one can reconstruct $X$. That is why
$K(Z_1|X\backslash Z_1)\ge K(X)- K(X\backslash Z_1)+O(1).$ 
Hence, 
\begin{equation}\label{2}
K(Z_1|X\backslash Z_1) \ge K(X) - \max_{|U|\le2^{n-1}(1 - \delta/2)}
K(U) +O(1).
\end{equation}
 The latter maximal complexity  can be calculated as
follows:
$$
\max_{|U|\le2^{n-1}(1-\delta)} K(U) = \log     {2^n \choose
2^{n-1} (1-\delta/2)} + O(1).
$$
 Applying the Stirling approximation for factorial, we
obtain
$$
\max_{|U|\le 2^{n-1}(1 - \delta/2)} K(U) \le  2^{n}(1-\gamma)
(1+o(1)),
$$
where $\gamma=1-h({2-\delta\over 4},{2+\delta\over 4})$. From this
equality, (\ref{kx}), and (\ref{2})  we get
%
\begin{equation}\label{kz1}
 K(Z_1|X\backslash Z_1)\ge \gamma 2^{n-1} (1+o(1)).
\end{equation}
 We will next show how to obtain
$Z_1$ from $X\backslash Z_1$ and  the stegosystem $\st$, thus
arriving at a contradiction with the assumption that $\log
K(\st)=o(n)$.

For a set $T\subset X$ define
$$
\phi(T):=\{\enc(x,1u): x\in T, u\in \{0,1\}^{n-1}\}.
$$
 Since $\st$ is perfectly
secure, $\phi(T)\subset X$ for every $T\subset X$.
 Let $T_0=X\backslash Z_1$, and $T_k=T_{k-1}\cup\phi(T_{k-1})$.
 Since $X$ is finite and each $T_{k-1}$ is a subset of $T_k$, there must be such $k_0\in\N$ that
 $T_k=T_{k_0}$ for all $k\ge k_0$.
 There are two possibilities: either $T_{k_0}= X$ or $X\backslash T_{k_0}\ne\emptyset$.
  Assume the latter, and define
$Z_1'=X\backslash T_{k_0}$. Then to obtain an element of $Z_1'$ as
an output of the  stegosystem $\st$, the input must be an element
of $Z_1'$ and a secret message that starts with $1$. From this,
and from the fact that the distribution of the output is the same
as the distribution of the input (that is, $\st$ is perfectly
secure), we get
$$
P_n(Z_1')=P_n(Z_1', y=1u)= P_n(Z_1')\omega(1)=P_n(Z_1')/2,
$$
which implies $P_n(Z_1')=0$ and $Z_1'=\emptyset$. Therefore, there
is a $k\in\N$ such that $T_k= X$. This means that a description of
$Z_1$ can be obtained from a description of $X\backslash Z_1=T_0$
and $\st$. Indeed, to obtain $Z_1$ it is sufficient to run $\enc$
on all elements of $T_0$ with all inputs starting with $1$, thus
obtaining $T_1$, and then repeat this procedure until we get
$T_{k+1}=T_k$ for some $k$, wherefrom we know that $T_k=X$ and
$Z_1=T_k\backslash T_0$. Thus,
\begin{equation}\label{contr}
 K(Z_1|X\backslash Z_1)\le
K(\st)+O(1)=2^{o(n)}
\end{equation}
 which contradicts~(\ref{kz1}).
\end{proof}

\subsection*{Acknowledgements}
The authors  are grateful to the anonymous reviewers for their constructive comments on the paper.
Some preliminary results were reported at  ISIT'09 \cite{Ryabko:09skc}.
Boris Ryabko was partially supported by  the  Russian Foundation of Basic Research; grant 09-07-00005.
Daniil Ryabko was partially supported by the French Ministry of Higher Education and Research, Nord-Pas de Calais Regional Council and FEDER through
CPER 2007-2013.


\end{document}